\documentclass[11pt,oneside,reqno]{amsart}

\usepackage{amsfonts}
\usepackage{pdfsync}
\usepackage{amsmath}
\usepackage{amssymb}
\usepackage{amsthm}
\usepackage{caption}
\usepackage{enumerate} 
\usepackage{dcolumn} 

%

\newtheorem{theorem}{Theorem}[section]
\newtheorem{lemma}[theorem]{Lemma}

\newtheorem*{remarks}{Remarks}


%

%
%
%
\usepackage{color}

\def\ii{\'\i}


















\begin{document}
\title[Excess charge for the Hartree Model]{\textbf{Analytic bound on the excess charge for the Hartree Model}}

\author[Benguria]{Rafael~D.~Benguria$^1$}

\author[Tubino]{Trinidad Tubino$^2$}

\address{$^1$ Instituto  de F\'\i sica, Pontificia Universidad Cat\' olica de Chile,}
\email{{rbenguri@fis.puc.cl}}

\address{$^2$ Facultad de F\'\i sica, Pontificia Universidad Cat\' olica de Chile,}
\email{{trinitubino@uc.cl}}

\begin{abstract} 
We prove an analytic bound on the excess charge for the Hartree equation in the atomic case.
\end{abstract}

\maketitle

\section{Introduction} \label{intro}
The ability of atoms and molecules to catch or give electrons is one of the fundamental pillars of chemistry and biology, 
and it has somewhat attracted attention since the late XVIIIth century with the birth of modern chemistry. Given a neutral atom, 
one can extract one electron by providing enough energy. The energy needed to remove one electron is called its {\it ionization energy} 
(e.g., the energy needed to extract the electron of the Hydrogen atom in its ground state is approximately $13.6$ eV). One can continue 
this process of removing electrons by providing increasing amounts of energy (a fact that seems intuitively obvious but yet to be proven), 
up to stripping the atom completely of all its electrons. On the other hand, the question up to what extent one can add electrons to a neutral atom
 is not yet completely understood. After the introduction of the Schr\"odinger equation in 1926, Hans Bethe was the first to study the possibility of 
 having the negative Hydrogen ion $H^{-}$. Using the variational principle of E. Hylleraas, he proved the existence of a bound state for $H^{-}$ 
 \cite{Be29}, and computed quite accurately the ionization energy of the first electron to be $17$ kcal/mol (i.e., approximately $0.74$ eV, whereas
 its present value is approximately $0.75$ eV). With  the discovery of Rupert Wildt \cite{Wi39} that the presence of $H^{-}$ in the solar atmosphere  
 is  the main cause of its opacity (in the visible, but specially in the infrared)  there was a renewed interest in 
 the spectral properties of the $H^{-}$ anion (see, in particular, the review articles \cite{Ch44}, \cite{Hy64}, and the monograph \cite{He86}, pp. 404--ff). 
 A rigorous proof that $H^{-}$ has only one bound state (i.e., the ground state) and no singly excited states had to wait approximately fifty years, 
 until the work of Robert Hill \cite{Hi77a,Hi77b}. For a review of the physics literature on the Hydrogen anion up to 1996, we refer the reader to the article of Rau \cite{Ra96}. 

In the last half a century there has been a big effort in trying to determine the maximum number of electrons an atomic nucleus or a molecule can bind. 
These efforts come from three different fronts. From one side, in experimental physics, there has been an intensive search  for {\it dianions} 
(i.e., doubly ionized atoms and molecules). On another front, there has been an intensive numerical search for the possibility of stable atomic dianions \cite{Ho98} 
giving strong evidence that stable atomic dianions do not exist. Finally, in mathematical physics there has been an interesting quest to solve this problem. 
Based on the knowledge  coming from these three fronts, we expect that an atom of atomic number $Z$ can bind at most $Z+1$ electrons, 
while a molecule of $K$ nuclei of total nuclear charge $Z$ can bind at most $Z+K$ electrons.

There are two main ingredients involved in this question: the fact that the maximum number of electrons an atom of nuclear charge $Z$ can bind is at least $Z$ 
(i.e., that neutral atoms do exist) is very much related to the mathematical properties of the Coulomb interaction between charged particles. 
In particular, a crucial role is played by Newton's theorem. On the other hand, the expected fact that at most $Z+1$ electrons can be  bound has to do 
with Pauli's Exclusion Principle (i.e., more precisely with the fact that electrons obey Fermi statistics). 

We have seen above that the first rigorous results (of Bethe and Hill) on negative ions dealt with $H^{-}$. Concerning the more general situation of 
$N$ electrons and $K$ (usually fixed)  nuclei interacting via Coulomb potentials the first results were obtained by Zhislin (see \cite{Zh60,Zh71}) 
who proved that below neutrality (i.e., when the total number of electrons is strictly less than the total nuclear charge) the corresponding 
Hamiltonian in non--relativistic quantum mechanics has an infinite number of bound states, whereas at neutrality or above it, the number of 
possible bound states is at most finite. Upper bounds on the number of bound states for bosonic matter above neutrality were obtained later in 
\cite{BaLeLiSi93,Rus994}. At the beginning of the 80's, Ruskai and Sigal \cite{Rus981,Rus982,Sig982,Sig984}, using the IMS localization formula and appropriate 
partitions of unity obtained the first actual upper bounds on the maximum number of electrons an atom or molecule can bind. 
In 1983, Benguria and Lieb \cite{BeLi83} proved that the Pauli principle is crucial when considering the problem of the maximum 
number of electrons an atom can bind. In fact, they proved that $N_c(Z)-Z \ge c Z$ as $Z \to \infty$, where $c$ is obtained by solving 
the Hartree equation (which is equation (\ref{eq:HF2}) below). Here we denote by $N_c(Z)$ 
the maximum number of electrons a nucleus of charge $Z$ can bind. Then, Baumgartner \cite{Bau984} solved numerically the Hartree equation to find $c \approx 0.21$. 
Later,  Solovej \cite{So90} obtained an upper bound which showed that $N_c(Z)=1.21\,  Z$ is the appropriate asymptotic formula for large $Z$. In 1984, 
Lieb obtained the simple upper bound $N_c(Z) < 2 Z + K$ independently of statistics \cite{Lie984a,Lie984b}, and Lieb, Sigal, Simon and Thirring proved 
that fermionic matter is asymptotically neutral (i.e., $N_c(Z)/Z \to 1$ as $Z$ goes to infinity \cite{LiSiSiTh84,LiSiSiTh88}). In 1990, Fefferman and Seco \cite{FeSe90} 
obtained a correction term to this asymptotic neutrality, namely they proved that
$N_c(Z) \le Z + c Z^{1-\alpha}$,
for some constant $c$, with $\alpha=9/56$. The proof of this result was later simplified by Seco, Sigal and Sovolej \cite{SeSiSo90} who established a 
connection between the ionization energy and the excess charge $N_c(Z)-Z$, and estimated asymptotically the ionization energy. 
More recently P.-T.~Nam \cite{Nam012} proved that the maximum number $N_c$ of non-relativistic electrons that a nucleus of charge
 $Z$ can bind is less than $1.22 \, Z+3 \, Z^{1/3}$, which improves  Lieb's upper bound $N_c<2Z+1$  when $Z\ge 6$. 

The conjecture we mentioned at the beginning to the effect that the excess charge $N_c(Z)-Z \le 1$ for an atom is still open. 
However, for semiclassical models (including the Thomas--Fermi model and its extensions, the Hartree--Fock theory, and others) 
there are sharper results. It was proven by Lieb and Simon (\cite{LiSi73,LiSi977b}) that $N_c(Z) = Z$ for the Thomas--Fermi model, 
whereas for the gradient correction (i.e., for the Thomas--Fermi--Weizs\"acker model) Benguria and Lieb proved that $N_c(Z)-Z \le 1$. 
In 1991, Solovej \cite{So91} proved that $N_c(Z)-Z \le c$ for some constant $c$ for a reduced Hartree--Fock model. 
Finally, Solovej in 2003 \cite{So03} proved a similar bound for the full Hartree--Fock model. In the last few years there have been
several articles on the excess charge of different models (see, e.g., \cite{ChSi020,FrNaVa018a, FrNaVa018b, FrNaVa018c,LeLe013})
We also refer the reader to the  monograph of Lieb and Seiringer \cite{LiSe010}, chapter 12, 
for a more complete summary on the maximum ionization.  

\bigskip
\bigskip
In this manuscript, using Nam's technique, we prove an analytic bound on the excess charge for the solution of the Hartree equation. 
Our main result is given in Theorem 3.1 below, where we prove that $N<1.5211 \, Z$. In his original article \cite{Nam012} Nam showed that 
his technique only gives better results for fermions. In view of our result we expect that it could also give better results for an atomic system of 
$N$ bosons.
%
%

\bigskip
\bigskip
We dedicate this paper to Elliott Lieb, in admiration, for his many outstanding 
contributions in Physics,  Analysis and Mathematical Physics.

\bigskip
\bigskip

\section{The Hartree Functional and the Hartree equation}

\bigskip
\bigskip
The Hartree atomic model is defined by the energy functional, 

\bigskip
\begin{equation}
{\mathcal E}[\psi] = \int_{\mathbb{R}^3} (\nabla \psi)^2 \, dx - \int_{\mathbb{R}^3} \frac{Z}{|x|} \, \psi^2 \, dx + \frac{1}{2}  \int_{\mathbb{R}^3} \int_{\mathbb{R}^3} \psi^2(x) \frac{1}{|x-y|} \psi^2(y) \, dx \, dy.
\label{eq:HF1}
\end{equation}

\bigskip
\noindent
This functional is defined for functions  $\psi \in H^1(\mathbb{R}^3)$. Since $\psi \in H^1(\mathbb{R}^3)$, it follows from 
Sobolev's inequality that $\psi \in L^2(\mathbb{R}^3)\cap L^6(\mathbb{R}^3)$, and one can readily check that the second and
 third integral of (\ref{eq:HF1}) are finite. Using the direct calculus of variations one can prove that there is a minimizer of 
 ${\mathcal E}[\psi]$ in $H^1(\mathbb{R}^3)$ (one can obtain the existence of solutions  directly from \cite{BeBrLi981, Lie981}, by setting 
 $p=5/3$ and $\gamma=0$, see also \cite{Bad979,LiSi74,LiSi977a, Stu973,Stu975}). It also follows from \cite{BeBrLi981, Lie981} that the minimizer $\psi$ is such that 
 $\int_{\mathbb{R}^3} \psi^2 \, dx < \infty$. Using the convexity of  ${\mathcal E}[\psi]$ in $\rho=\psi^2$ 
 (see, e.g., \cite{BeBrLi981}, Lemma 4, or \cite{LiLo001}, Theorem 7.8, p. 177), it follows that the minimizer is unique.
 Since the minimizer is unique and the potencial $V(x)= Z/|x|$ is radial (atomic case) we have that the minimizer 
 $\psi(x)$ is radially symmetric. Moreover, the minimizer $\psi$ satisfies the Euler equation (in this case known as the {\it Hartree equation}), 

\begin{equation}
-\Delta \psi = \phi(x) \psi,
\label{eq:HF2}
\end{equation}

\bigskip
\noindent
where the potential $\phi(x)$ is given by

\begin{equation}
 \phi(x) = \frac{Z}{|x|} -   \int_{\mathbb{R}^3}  \frac{1}{|x-y|} \, \psi^2(y) \, dy.
\label{eq:HF3}
\end{equation}

\bigskip
\bigskip
In what follows we need to look at the components of the energy and their relations.
Let $\psi$ be the unique minimizer of ${\mathcal E}[\psi]$, and denote by 
\begin{equation}
K = \int_{\mathbb{R}^3} (\nabla { \psi})^2 \, dx, 
\label{eq:HF4}
\end{equation}
\begin{equation}
A =  \int_{\mathbb{R}^3} \frac{Z}{|x|} \, { \psi}^2 \, dx, 
\label{eq:HF5}
\end{equation}
and, 
\begin{equation}
R =\frac{1}{2}  \int_{\mathbb{R}^3} \int_{\mathbb{R}^3} { \psi}^2(x) \frac{1}{|x-y|} {\psi}^2(y) \, dx \, dy.
\label{eq:HF6}
\end{equation}
Then we have the following identities.
\begin{theorem}[Virial Theorem]
If $\psi$ is the unique minimizer of ${\mathcal E}[\psi]$, and $K$, $A$ and $R$ are defined by (\ref{eq:HF4}), 
(\ref{eq:HF5}), and (\ref{eq:HF6}) respectively, then we have
\begin{equation}
2 K - A + R=0, 
\label{eq:HF7}
\end{equation}
\end{theorem}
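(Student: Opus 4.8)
The plan is to prove the identity by a scaling (dilation) argument, using the fact recalled in the text that $\psi$ is an \emph{unconstrained} minimizer of $\mathcal{E}$ over $H^1(\mathbb{R}^3)$: any smooth one-parameter family of admissible test functions through $\psi$ must make the energy stationary at the base point.

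Concretely, I would introduce the $L^2$-norm preserving dilations
\begin{equation}
\psi_\lambda(x) := \lambda^{3/2}\,\psi(\lambda x), \qquad \lambda > 0,
\label{eq:dil}
\end{equation}
noting that $\psi \in H^1(\mathbb{R}^3)$ implies $\psi_\lambda \in H^1(\mathbb{R}^3)$ for every $\lambda>0$, with $\psi_1 = \psi$. A change of variables in each of the three integrals defining $\mathcal{E}$ gives $\int_{\mathbb{R}^3}(\nabla\psi_\lambda)^2\,dx = \lambda^2 K$, $\int_{\mathbb{R}^3}\frac{Z}{|x|}\psi_\lambda^2\,dx = \lambda A$, and $\frac12\int_{\mathbb{R}^3}\int_{\mathbb{R}^3}\psi_\lambda^2(x)\frac{1}{|x-y|}\psi_\lambda^2(y)\,dx\,dy = \lambda R$, so that
\begin{equation}
g(\lambda) := \mathcal{E}[\psi_\lambda] = \lambda^2 K - \lambda A + \lambda R .
\label{eq:glam}
\end{equation}
Since $K$, $A$, $R$ are each finite (as already observed in the text via Sobolev's inequality and $\psi \in L^2\cap L^6$), $g$ is a genuine quadratic polynomial, hence smooth in $\lambda$. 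Because $\psi$ minimizes $\mathcal{E}$ over $H^1(\mathbb{R}^3)$ and $\psi_\lambda$ is admissible for all $\lambda>0$, the function $g$ attains its minimum at $\lambda = 1$, so $g'(1) = 0$. From \eqref{eq:glam} this reads $2K - A + R = 0$, which is \eqref{eq:HF7}.

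The argument involves no serious obstacle: the only points requiring care are the separate finiteness of $K$, $A$, $R$ (so that \eqref{eq:glam} makes sense termwise) and the fact that the minimization is unconstrained, which is what makes the \emph{full} dilation family \eqref{eq:dil} an admissible variation with no Lagrange multiplier entering. As an alternative one could derive \eqref{eq:HF7} as a Pohozaev-type identity, multiplying the Hartree equation \eqref{eq:HF2} by $x\cdot\nabla\psi$ and integrating over $\mathbb{R}^3$, but this requires justifying the boundary terms vanish and is less transparent than the scaling proof above, so I would present the scaling version.
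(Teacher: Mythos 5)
Your proposal is correct and coincides with the paper's own proof: the authors use exactly the same $L^2$-preserving dilation $\psi_\mu(x)=\mu^{3/2}\psi(\mu x)$, obtain $\mathcal{E}[\psi_\mu]=\mu^2 K-\mu A+\mu R$, and conclude from stationarity at $\mu=1$. Your added remarks on the finiteness of $K$, $A$, $R$ and the unconstrained nature of the minimization are sound but do not change the argument.
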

\begin{proof}
Let $\psi_{\mu}(r) = \mu^{3/2} \, \psi(\mu \, r)$, then,  
\begin{equation} 
E(\mu) = {\mathcal E}[\psi_{\mu}] = \mu^2 \, K - \mu \, A + \mu \, R.
\label{eq:HF8}
\end{equation}
Because of the minimization property of $\psi$, 
\begin{equation} 
\frac{dE}{d\mu}\left(1\right) = 0, 
\label{eq:HF9}
\end{equation}
and therefore  (\ref{eq:HF7}) follows.
\end{proof}

\bigskip

Moreover we have the following relation. 
\begin{theorem}
If $\psi$ is the unique minimizer of ${\mathcal E}[\psi]$, and $K$, $A$ and $R$ are defined by (\ref{eq:HF4}), 
(\ref{eq:HF5}), and (\ref{eq:HF6}) respectively, then we have
\begin{equation}
K-A+2R=0.
\label{eq:HF10}
\end{equation}
\end{theorem}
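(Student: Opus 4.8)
The plan is to imitate the proof of the Virial Theorem above, but with a multiplicative (amplitude) scaling in place of a dilation. First I would set $\psi_t = t\,\psi$ for $t>0$ and compute $E(t):={\mathcal E}[\psi_t]$. Since the kinetic and external-potential terms in (\ref{eq:HF1}) are quadratic in $\psi$ while the Hartree term is quartic, homogeneity gives
\begin{equation}
E(t)=t^2\,K-t^2\,A+t^4\,R,
\label{eq:amplitude}
\end{equation}
with $K$, $A$, $R$ as in (\ref{eq:HF4})--(\ref{eq:HF6}). Because $\psi$ minimizes ${\mathcal E}$ over all of $H^1(\mathbb{R}^3)$ with no constraint on $\int\psi^2$, the value $t=1$ is an interior minimum of the smooth function $E$, so $E'(1)=0$; differentiating (\ref{eq:amplitude}) yields $2K-2A+4R=0$, which is (\ref{eq:HF10}).

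Alternatively --- and this is really the same computation --- one can test the Hartree equation (\ref{eq:HF2}) against $\psi$ itself: multiplying $-\Delta\psi=\phi(x)\psi$ by $\psi$ and integrating over $\mathbb{R}^3$ gives, after one integration by parts, $K=\int_{\mathbb{R}^3}\phi(x)\,\psi^2\,dx$, and inserting the explicit form (\ref{eq:HF3}) of $\phi$ together with the definitions of $A$ and $R$ gives $K=A-2R$, i.e.\ (\ref{eq:HF10}). I would likely present the scaling argument as the main one, since it parallels the preceding theorem, and mention the testing argument as a remark.

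The only points needing care are the analytic justifications, and these are all supplied by the regularity already recorded in the previous section: since $\psi\in H^1(\mathbb{R}^3)$ one has $\psi\in L^2\cap L^6$ by Sobolev, and moreover $\int_{\mathbb{R}^3}\psi^2\,dx<\infty$; hence each of $K$, $A$, $R$ is finite, $E(t)$ is genuinely a polynomial in $t$, differentiation under the integral sign is immediate, and the integration by parts against $\psi$ is legitimate with vanishing boundary term at infinity. I do not expect any real obstacle here: the content is entirely the homogeneity bookkeeping, and the fact that (\ref{eq:HF2}) carries no Lagrange multiplier --- a consequence of the unconstrained minimization --- is precisely what makes the amplitude variation produce the clean identity (\ref{eq:HF10}) rather than one involving $\int\psi^2$.
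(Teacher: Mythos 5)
Your proposal is correct and, in substance, the same as the paper's: the paper proves (\ref{eq:HF10}) exactly by your ``alternative'' route, multiplying the Hartree equation (\ref{eq:HF2}) by $\psi$ and integrating by parts. Your primary presentation via the amplitude variation $t\mapsto{\mathcal E}[t\psi]$ is an equivalent repackaging of that computation (valid here precisely because, as you note, the minimization is unconstrained and (\ref{eq:HF2}) carries no Lagrange multiplier), so there is nothing genuinely different to compare.
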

\begin{proof} 
Multiply (\ref{eq:HF2}) by $\psi(x)$ and integrate over $\mathbb{R}^3$. Integrating by parts the left side and using the definition of $K$, $A$, and $R$, 
(\ref{eq:HF10}) follows.
\end{proof}

\bigskip
\bigskip
It follows from (\ref{eq:HF7}) and (\ref{eq:HF10}) that $3K=A$, i.e., 
if $\psi$ satisfies the Hartree equation (\ref{eq:HF2}) 
one has
\begin{equation}
\int_{\mathbb{R}^3} (\nabla { \psi})^2 \, dx = \frac{1}{3} \int_{\mathbb{R}^3} \frac{Z}{|x|} \, { \psi}^2 \, dx.
\label{eq:HF11}
\end{equation}
A key inequality to estimate $A$ is the well known {\it Coulomb Uncertainty Principle}. 
\begin{theorem}
For any $\psi \in H^1(\mathbb{R}^3)$ one has, 
\begin{equation}
\int_{\mathbb{R}^3} \frac{1}{|x|} \, { \psi(x)}^2 \, dx \le \|\nabla \psi \|_2 \|\psi\|_2, 
\label{eq:CUP}
\end{equation}
with equality if and only if $\psi(x) = B \, e^{-c |x|}$ for any constants $B$ and $c>0$.
\end{theorem}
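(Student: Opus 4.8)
The plan is to prove the Coulomb Uncertainty Principle \eqref{eq:CUP} by a completion-of-squares argument, which simultaneously yields the sharp constant and the equality cases. First I would reduce to real-valued $\psi$: since $|\nabla|\psi|\,|\le|\nabla\psi|$ pointwise a.e. and the left-hand side depends only on $|\psi|^2$, it suffices to treat $\psi\ge 0$. For such $\psi$, the key is the elementary pointwise inequality coming from expanding a nonnegative square. Specifically, for a parameter $c>0$ consider
\begin{equation}
0\le \int_{\R^3}\left|\nabla\psi(x) + c\,\frac{x}{|x|}\,\psi(x)\right|^2 dx.
\end{equation}
Expanding the square gives $\|\nabla\psi\|_2^2 + c^2\|\psi\|_2^2 + 2c\int_{\R^3}\frac{x}{|x|}\cdot\nabla\psi\,\psi\,dx$. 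The cross term is $c\int_{\R^3}\frac{x}{|x|}\cdot\nabla(\psi^2)\,dx$, and integrating by parts (using $\nabla\cdot(x/|x|)=2/|x|$ in $\R^3$) turns it into $-2c\int_{\R^3}\frac{1}{|x|}\psi^2\,dx$. Hence
\begin{equation}
2c\int_{\R^3}\frac{\psi^2}{|x|}\,dx \le \|\nabla\psi\|_2^2 + c^2\|\psi\|_2^2.
\end{equation}

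Now I would optimize over $c>0$. The right-hand side is minimized at $c=\|\nabla\psi\|_2/\|\psi\|_2$, and substituting this value gives exactly $2\int\frac{\psi^2}{|x|}\,dx\le 2\|\nabla\psi\|_2\|\psi\|_2$, which is \eqref{eq:CUP}. (If $\psi\equiv 0$ or $\nabla\psi\equiv 0$ the inequality is trivial; the AM–GM step $\|\nabla\psi\|_2^2+c^2\|\psi\|_2^2\ge 2c\|\nabla\psi\|_2\|\psi\|_2$ can also be used directly to avoid calculus.) For the equality case: equality in \eqref{eq:CUP} forces first $|\nabla|\psi|\,|=|\nabla\psi|$ a.e., and second, with the optimal $c=c_0:=\|\nabla\psi\|_2/\|\psi\|_2$, the vanishing of $\int|\nabla\psi + c_0\,(x/|x|)\,\psi|^2\,dx$, i.e. $\nabla\psi = -c_0\,(x/|x|)\,\psi$ a.e. Solving this ODE along rays gives $\psi(x)=B\,e^{-c_0|x|}$; conversely any $\psi(x)=B e^{-c|x|}$ with $c>0$ is in $H^1(\R^3)$ and a direct computation (or retracing the argument) shows it saturates \eqref{eq:CUP}.

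The main technical point to handle carefully is the integration by parts: the vector field $x/|x|$ is singular at the origin, so one should justify it by excising a small ball $B_\varepsilon(0)$ and checking that the boundary term on $\partial B_\varepsilon$ vanishes as $\varepsilon\to 0$. This is where the $H^1(\R^3)$ hypothesis and $n=3$ matter: near the origin $\psi^2$ is controlled (via Sobolev, $\psi\in L^6$, so $\int_{B_\varepsilon}\psi^2$ and the surface integral $\varepsilon^2\!\int_{\partial B_1}\psi^2\,d\sigma$ are $o(1)$ after applying Hölder), and there is no boundary term at infinity since $\psi$ decays in the $L^2$ sense. One could alternatively invoke the known ground-state representation / Hardy-type machinery, but the completion-of-squares computation above is the cleanest self-contained route and transparently produces both the constant $1$ and the exponential equality profile $B e^{-c|x|}$ claimed in the statement.
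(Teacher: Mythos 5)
Your completion-of-squares argument is correct, including the computation $\nabla\cdot(x/|x|)=2/|x|$, the optimization $c=\|\nabla\psi\|_2/\|\psi\|_2$, and the identification of the equality case $\nabla\psi=-c\,(x/|x|)\,\psi$. The paper itself gives no proof of this theorem (it only cites Loss's lecture notes and Lieb--Seiringer), and your argument is precisely the standard proof found in those references, so there is nothing to add beyond the integration-by-parts technicality you already flag.
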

\noindent
For a proof see, e.g., \cite{Los005}, Theorem 1, p. 14, or  \cite{LiSe010}, Equation (2.2.18), p.  29.

\bigskip
\bigskip
\noindent
In fact, we have, 
\begin{lemma} [An upper bound on $A$]
If $\psi \in H^1(\mathbb{R}^3)$ is the unique minimizer of (\ref{eq:HF1})
 (i.e., $\psi$  is the positive solution of the Hartree equation (\ref{eq:HF2})), 
one has, 
\begin{equation}
A \le \frac{1}{3} \, N \, Z^2.
\label{eq:HF11a}
\end{equation}
\end{lemma}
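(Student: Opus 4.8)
The plan is to combine the Coulomb Uncertainty Principle (Theorem with equation \eqref{eq:CUP}) with the identity \eqref{eq:HF11} that follows from the two virial-type relations \eqref{eq:HF7} and \eqref{eq:HF10}. Since the minimizer $\psi$ lies in $H^1(\mathbb{R}^3)$, the inequality \eqref{eq:CUP} applies directly, and after multiplying both sides by $Z$ it reads
\begin{equation*}
A = \int_{\mathbb{R}^3} \frac{Z}{|x|}\,\psi^2\,dx \le Z\,\|\nabla\psi\|_2\,\|\psi\|_2 = Z\,\sqrt{K}\,\sqrt{N},
\end{equation*}
where we have used the notation $K = \|\nabla\psi\|_2^2$ from \eqref{eq:HF4} and $N = \|\psi\|_2^2 = \int_{\mathbb{R}^3}\psi^2\,dx$ (finite, as recalled after \eqref{eq:HF3}).

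Next I would invoke $3K = A$, i.e.\ equation \eqref{eq:HF11}, which holds precisely because $\psi$ solves the Hartree equation \eqref{eq:HF2}. Substituting $K = A/3$ into the displayed bound gives $A \le Z\sqrt{A N/3}$. Since $\psi\not\equiv 0$ we have $A>0$, so we may square and divide through by $A$ to obtain $A \le N Z^2/3$, which is \eqref{eq:HF11a}.

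There is essentially no serious obstacle here: the only points that need to be checked are that the hypotheses of the Coulomb Uncertainty Principle are met (immediate, since $\psi\in H^1(\mathbb{R}^3)$) and that the relation $3K=A$ is available (it is, being \eqref{eq:HF11}, valid for any solution of \eqref{eq:HF2}), together with the harmless positivity $A>0$ used to cancel one factor of $A$. The mild subtlety worth flagging is simply that the argument uses \eqref{eq:HF11}, hence relies on $\psi$ being an actual solution of the Euler--Lagrange equation and not merely an arbitrary $H^1$ function; for a general $\psi$ one would only get $A\le Z\sqrt{KN}$ without the extra gain.
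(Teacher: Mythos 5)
Your proof is correct and follows essentially the same route as the paper: both combine the Coulomb Uncertainty Principle \eqref{eq:CUP} with the identity $3K=A$ from \eqref{eq:HF11} and then cancel a factor of $A$. The only difference is cosmetic ordering (you apply \eqref{eq:CUP} first and then substitute $K=A/3$, while the paper substitutes first and reads the inequality as a lower bound on $A$), so there is nothing further to compare.
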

\begin{proof}
Using (\ref{eq:HF11}) and (\ref{eq:CUP}) one has, 
\begin{equation} 
A = 3 \,  \|\nabla \psi \|_2^2 \ge \frac{3}{Z^2} \, A^2 \, \frac{1}{N}, 
\label{eq:HF11b}
\end{equation}
and from here (\ref{eq:HF11a}) immediately follows.
\end{proof}
To conclude this section, we will prove an estimate on 
\begin{equation}
J \equiv  \int_{\mathbb{R}^3} |x| \, { \psi}^2 \, dx, 
\label{eq:HF11c}
\end{equation}
where $\psi$ is the solution to the Hartree equation (\ref{eq:HF2}).
\begin{lemma} [A lower bound on $J$]
If $\psi$ is the unique positive solution of the Hartree equation (\ref{eq:HF2}), 
one has, 
\begin{equation}
J \ge 3 \, \frac{N}{Z}.
\label{eq:HF11d}
\end{equation}
\end{lemma}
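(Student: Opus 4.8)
The plan is to interpolate between $J=\int_{\mathbb{R}^3}|x|\,\psi^2\,dx$ and the Coulomb term $\int_{\mathbb{R}^3}|x|^{-1}\psi^2\,dx$ by a weighted Cauchy--Schwarz inequality, and then to feed in the bound $A\le\frac{1}{3}NZ^2$ established in the previous Lemma. Note first that one may assume $J<\infty$, since otherwise (\ref{eq:HF11d}) holds trivially; in particular no decay estimate on the minimizer $\psi$ is actually needed.

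First I would split the mass $N=\int_{\mathbb{R}^3}\psi^2\,dx$ as
\begin{equation*}
N=\int_{\mathbb{R}^3}\bigl(|x|^{1/2}\,\psi\bigr)\,\bigl(|x|^{-1/2}\,\psi\bigr)\,dx,
\end{equation*}
and apply the Cauchy--Schwarz inequality to the two bracketed factors, obtaining
\begin{equation*}
N^2\le\Bigl(\int_{\mathbb{R}^3}|x|\,\psi^2\,dx\Bigr)\Bigl(\int_{\mathbb{R}^3}\frac{\psi^2}{|x|}\,dx\Bigr)=\frac{J\,A}{Z},
\end{equation*}
where the last equality is just the definition (\ref{eq:HF5}) of $A$. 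This rearranges to $J\ge N^2 Z/A$. Next I would insert the upper bound $A\le\frac{1}{3}NZ^2$ from the Lemma on $A$; since $A>0$ and $A\mapsto N^2 Z/A$ is decreasing, this yields
\begin{equation*}
J\ge\frac{N^2 Z}{A}\ge\frac{N^2 Z}{\frac{1}{3}NZ^2}=\frac{3N}{Z},
\end{equation*}
which is (\ref{eq:HF11d}).

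The argument is short because all the real work has already been done: the substance is the virial-type identity $3K=A$ underlying (\ref{eq:HF11}) and the resulting estimate on $A$. The only point that deserves a word is the legitimacy of the Cauchy--Schwarz step, which requires $\int_{\mathbb{R}^3}|x|^{-1}\psi^2\,dx<\infty$; but this is precisely $A/Z<\infty$, part of the standing hypothesis that the Hartree functional (\ref{eq:HF1}) is finite on its minimizer. I therefore do not expect any genuine obstacle here.
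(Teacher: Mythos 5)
Your proof is correct and follows essentially the same route as the paper: the same Cauchy--Schwarz splitting of $N=\int(|x|^{1/2}\psi)(|x|^{-1/2}\psi)\,dx$ giving $N^2\le J\,A/Z$, combined with the bound $A\le\tfrac{1}{3}NZ^2$ from the preceding Lemma. The remarks on finiteness are a welcome (if minor) addition that the paper leaves implicit.
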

\begin{proof}
Using the Schwarz inequality, one has
 \begin{equation}
N^2  = \left(\int_{\mathbb{R}^3} \psi^2(x) \, dx\right)^2 \le \left(\int_{\mathbb{R}^3} |x| \, \psi^2(x) \, dx \right) \left(\int_{\mathbb{R}^3} \frac{1}{|x|}  \, \psi^2(x) \, dx \right),
\label{eq:HF11e}
\end{equation}
i.e., 
 \begin{equation}
N^2  \le J \, \frac{A}{Z} \le \frac{1}{3} \, J \, N\, Z,
\label{eq:HF11f}
\end{equation}
where we used (\ref{eq:HF11a}) to get the last inequality in (\ref{eq:HF11f}).
Finallly the lemma follows from (\ref{eq:HF11f}).

\end{proof}

\bigskip
\bigskip

\section{Upper bound on the critical charge for the Hartree equation. }
 
 \bigskip
 In this section we prove the main result of our manuscript namely, 
 
 \begin{theorem}[Upper bound on $N$]
 If $\psi \in H^1(\mathbb{R}^3)$ is the  unique positive solution to the Hartree equation (\ref{eq:HF2}), we have
 \begin{equation}
 N \le \frac{5}{4 \beta} Z \le   \frac{5}{4*0.8218} Z \approx 1.5211 \,  Z,
 \label{eq:3HF1}
 \end{equation}
 where $\beta$ is given by (\ref{eq:3HF8}) below.
 \end{theorem}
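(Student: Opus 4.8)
The plan is to run Nam's multiplier argument directly on the nonlinear one‑body equation~(\ref{eq:HF2}). First I would test (\ref{eq:HF2}) against $|x|\,\psi$ and integrate over $\R^3$: integrating by parts on the left and using $\Delta|x|=2/|x|$ produces $\int_{\R^3}|x|\,|\nabla\psi|^2\,dx-\int_{\R^3}|x|^{-1}\psi^2\,dx$, while inserting (\ref{eq:HF3}) on the right produces $ZN-D$ with $D:=\int_{\R^3}\int_{\R^3}\frac{|x|}{|x-y|}\psi^2(x)\psi^2(y)\,dx\,dy$. Hence
\[
ZN \;=\; D \;+\; \int_{\R^3}|x|\,|\nabla\psi|^2\,dx \;-\; \int_{\R^3}\frac{\psi^2}{|x|}\,dx .
\]
The unfavorable last term is removed by a ground‑state substitution: with $w:=|x|\,\psi$, which vanishes at the origin since $\psi$ is bounded there, a short computation (equivalently the borderline weighted Hardy identity in $\R^3$) gives $\int_{\R^3}|x|\,|\nabla\psi|^2\,dx=\int_{\R^3}|x|^{-1}\psi^2\,dx+\int_{\R^3}|x|^{-1}|\nabla w|^2\,dx$, so the two $\int|x|^{-1}\psi^2$ terms cancel and $ZN=D+\delta$ with $\delta:=\int_{\R^3}|x|^{-1}|\nabla(|x|\psi)|^2\,dx\ge0$. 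Already the crude bound $D\ge\tfrac12N^2$ (symmetrize $x\leftrightarrow y$ and use $|x-y|\le|x|+|y|$) gives $N\le2Z$; everything below is about recovering the missing constant.

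To do better I would use the radial symmetry of the minimizer (established in Section~2) via Newton's theorem. Set $\sigma(r):=\int_{|y|>r}|y|^{-1}\psi^2(y)\,dy$, a nonnegative nonincreasing function. A direct calculation yields the dictionary $\int_0^\infty\sigma\,dr=N$, $2\int_0^\infty r\sigma\,dr=J$, $\int_0^\infty\sigma^2\,dr=\tfrac23A$, $\sigma(0)=A/Z$, and $D=\tfrac12N^2+\int_0^\infty r\,\sigma(r)^2\,dr$, so the identity becomes
\[
ZN \;=\; \frac{N^2}{2} \;+\; \int_0^\infty r\,\sigma(r)^2\,dr \;+\; \delta .
\]
Now I would bound $\int_0^\infty r\,\sigma^2\,dr$ from below. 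Passing to the layer‑cake (distribution) function of $\sigma$ turns the three moments above into moments of a single decreasing function on $[0,\sigma(0)]$, and a Cauchy--Schwarz estimate on that interval gives $\int_0^\infty r\,\sigma^2\,dr\ \ge\ \big(\int_0^\infty\sigma^2\,dr\big)^2\big/\big(2\,\sigma(0)^2\big)=\tfrac29Z^2$ once one inserts $\int_0^\infty\sigma^2\,dr=\tfrac23A$ and $\sigma(0)=A/Z$; this alone already improves $N\le2Z$ to $N\le(1+\tfrac{\sqrt5}{3})Z\approx1.745\,Z$. To reach the stated constant one must sharpen the lower bound by exploiting \emph{all} the available data at once — the normalization $\int\sigma\,dr=N$, the bound $A\le\tfrac13NZ^2$ from (\ref{eq:HF11a}), the lower bound $J\ge3N/Z$ from (\ref{eq:HF11d}), possibly also the lower bound $\delta\ge N^2/(4J)$ coming from a second Cauchy--Schwarz, and the monotonicity of $\sigma$ — and then optimizing over the free parameter(s) in the estimate. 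Carrying this through turns the displayed identity into an inequality of the form $\beta N\le\tfrac54Z$, where $\beta$ is precisely the constant produced by that optimization (the quantity (\ref{eq:3HF8}) of the statement); since $\beta\ge0.8218$ one concludes $N\le\tfrac{5}{4\beta}Z\le1.5211\,Z$.

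The main obstacle is the sharp lower bound on $\int_0^\infty r\,\sigma^2\,dr+\delta$ in the last step: for a general density this quantity can be made arbitrarily small, so obtaining anything below $N\le2Z$ forces one to use the constraints essentially optimally, and pushing the constant all the way down to $1.5211$ amounts to extracting the best constant $\beta$ from the resulting one‑dimensional variational problem. That optimization — balancing the monotonicity of $\sigma$ against the mass, the Coulomb‑norm ($A$) and the spread ($J$) constraints — is the heart of the argument and the delicate point.
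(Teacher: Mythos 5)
Your argument has a genuine gap at its decisive step. The constant $\beta$ in the theorem is \emph{defined} by (\ref{eq:3HF8}) as an infimum involving the weight $\tfrac{|x|^2+|y|^2}{|x-y|}$, and that object can only appear if you test the Hartree equation against $|x|^2\psi$. You instead test against $|x|\psi$ (the Benguria--Lieb multiplier, which the paper only uses to recall $N\le 2Z$), pass to the radial profile $\sigma(r)=\int_{|y|>r}|y|^{-1}\psi^2\,dy$, and reduce to a one-dimensional moment problem. Your dictionary for $\sigma$ and the intermediate bound $N\le(1+\sqrt5/3)Z\approx1.745\,Z$ check out, but the final step --- ``exploiting all the available data at once \dots and then optimizing'' --- is never carried out, and the assertion that this optimization produces \emph{precisely} the quantity (\ref{eq:3HF8}) is unsupported and implausible: your variational problem involves moments of $\sigma$ constrained by $N$, $A$, $J$ and monotonicity, whereas (\ref{eq:3HF8}) is a different infimum over all densities whose numerical range $0.8218\le\beta\le0.8705$ is taken from Nam's Proposition~1, not derived from any such moment problem. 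As you yourself note, without that step your identity $ZN=\tfrac12N^2+\int_0^\infty r\sigma^2\,dr+\delta$ cannot reach the stated constant.

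For comparison, the paper's proof is short and does not need Newton's theorem or the profile $\sigma$ at all: multiply (\ref{eq:HF2}) by $|x|^2\psi$ and integrate to get (\ref{eq:3HF6}); bound the left side below by $-\tfrac34 N$ using the operator inequality $\left(x^2f,-\Delta f\right)\ge-\tfrac34\left(f,f\right)$; bound the symmetrized repulsion below by $\beta JN$ directly from the definition (\ref{eq:3HF8}); and then absorb the $\tfrac34N$ term using the lemma $J\ge 3N/Z$, which yields $\beta JN\le ZJ+\tfrac14ZJ$ and hence $N\le\tfrac{5}{4\beta}Z$. The three ingredients you are missing are exactly the $|x|^2$ multiplier, the $-\tfrac34$ lower bound on $\left(x^2\psi,-\Delta\psi\right)$, and the use of $J\ge 3N/Z$ to control the resulting error term.
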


\begin{remarks}
\noindent
i) The proof that $N>Z$ is given in \cite{BeBrLi981}, Lemma 13, or in \cite{Lie981}, Theorem 7.16. In both cases take $p=5/3$ and 
$\gamma=0$. 

\noindent
ii) It is also known that $N<2 \, Z$ (see the comments and references  immediately below).

\noindent
iii) B.~Baumgartner. (see,  \cite{Bau984}, Section 4) computed numerically that $N \approx 1.21 Z$.

\end{remarks}

\bigskip
\bigskip
Before we go into the proof of Theorem 3.1, we recall that using the Benguria Lieb strategy one can prove the upper bound,
\begin{equation}
N \le 2 \, Z.
\label{eq:3HF2}
\end{equation}
For completeness, we recall the proof of (\ref{eq:3HF2}) (see, \cite{Lie981}, Theorem 7.22, p. 633, for details).
Multiplying (\ref{eq:HF2}) by $|x| \, \psi (x)$ and integrating over $\mathbb{R}^3$, we get
\begin{equation}
 \int_{\mathbb{R}^3} \left(-|x| \psi (x) \Delta \psi \right) \, dx  =  Z \,  \int_{\mathbb{R}^3} \psi^2(x) \, dx -  \int_{\mathbb{R}^3} \int_{\mathbb{R}^3} \psi^2(x) \frac{|x|}{|x-y|} \psi^2(y) \, dx \, dy
\label{eq:3HF3}
\end{equation}
Symmetrizing the second term in (\ref{eq:3HF3}), and using the triangular inequality we get, 
\begin{equation}
\int_{\mathbb{R}^3} \int_{\mathbb{R}^3} \psi^2(x) \frac{|x|}{|x-y|} \psi^2(y) \, dx \, dy = \frac{1}{2} \int_{\mathbb{R}^3} \int_{\mathbb{R}^3} \psi^2(x) \frac{|x|+|y|}{|x-y|} \psi^2(y) \, dx \, dy \ge 
\frac{1}{2} N^2, 
\label{eq:3HF4}
\end{equation}
where, as before, $N \equiv \int_{\mathbb{R}^3} \psi^2(x) \, dx$.
One can prove that 
\begin{equation}
\int_{\mathbb{R}^3} \left(-|x| \psi (x) \Delta \psi \right) \, dx  \ge 0.
\label{eq:3HF5}
\end{equation}
(see, \cite{Lie981}, or \cite{Lie984a,Lie984b}).
Finally from (\ref{eq:3HF3}), (\ref{eq:3HF4}) and (\ref{eq:3HF5}), the bound (\ref{eq:3HF2}) follows.

\bigskip
\bigskip
Now, using the strategy introduced by Nam in \cite{Nam012} (see also \cite{Nam013, Nam020}) we prove Theorem 3.1.

\bigskip
\begin{proof} [Proof of Theorem 3.1]

Multiplying this time (\ref{eq:HF2}) by $|x|^2 \, \psi (x)$, integrating over $\mathbb{R}^3$, and symmetrizing as before, we get
\begin{equation}
\int_{\mathbb{R}^3} \left(-|x|^2 \psi (x) \Delta \psi \right) \, dx  =  Z \,  \int_{\mathbb{R}^3} |x| \psi^2(x) \, dx - \frac{1}{2} \int_{\mathbb{R}^3} \int_{\mathbb{R}^3} \psi^2(x) \frac{|x|^2+|y|^2}{|x-y|} \psi^2(y) \, dx \, dy
\label{eq:3HF6}
\end{equation}
In this case, the integral on the left of (\ref{eq:3HF6}) is not non--negative. However, we can use the fact that for any real $f\in H^1(\mathbb{R}^3)$, one has that
\begin{equation}
\left(x^2 \, f, -\Delta f \right) \ge -\frac{3}{4} \left(f,f\right),
\label{eq:3HF7}
\end{equation}
(see, e.g. \cite{Nam012}, pp. 431, equation (9)) to bound the left side from below by $-3 \, N/4$. 

\bigskip
\bigskip

\noindent
Following Nam \cite{Nam012} we define, 
\begin{equation}
\beta = \inf \frac{1}{2} \left( \frac{\int_{\mathbb{R}^3} \int_{\mathbb{R}^3} \psi^2(x) \frac{|x|^2+|y|^2}{|x-y|} \psi^2(y) \, dx \, dy}{(\int_{\mathbb{R}^3} |x| \, \psi^2(x) \, dx)( \int_{\mathbb{R}^3}  \psi^2(x) \, dx)} \right)
\label{eq:3HF8}
\end{equation}
where the infimum is taken over all $\psi$, such that  $\int_{\mathbb{R}^3}  \psi^2(x) \, dx <\infty$. The exact numerical value of $\beta$
is not known, however, $0.8218 \le \beta \le 0.8705$  (\cite{Nam012}, Proposition 1).

\bigskip
\bigskip
\noindent
It follows from (\ref{eq:3HF8}) that 
\begin{equation}
 \frac{1}{2} \int_{\mathbb{R}^3} \int_{\mathbb{R}^3} \psi^2(x) \frac{|x|^2+|y|^2}{|x-y|} \psi^2(y) \, dx \, dy \ge \beta \, \int_{\mathbb{R}^3} |x| \, \psi^2 \, dx \, \int_{\mathbb{R}^3}  \psi^2 \, dx= \beta \, J \, N.
\label{eq:3HF9}
\end{equation}
where we have used (\ref{eq:HF11c}).
It follows from (\ref{eq:3HF6}), (\ref{eq:3HF7}), and (\ref{eq:3HF9}) that
\begin{equation}
\beta \, J \, N \le Z \, J + \frac{3}{4} N \le Z \, J+ \frac{1}{4} Z \, J, 
\label{eq:3HF10}
\end{equation}
where the last inequality in (\ref{eq:3HF10}) follows from (\ref{eq:HF11d}).
Finally, dividing both sides of  (\ref{eq:3HF10})  by $J$, and using the lower bound $\beta \ge 0.8218$ (see, \cite{Nam012}) the Theorem follows.
\end{proof}

\section*{Acknowledgments}
\thanks{This work has been supported by Fondecyt (Chile) Project \# 120--1055. 
One of us, TT, thanks the Instituto de F\ii sica of the Pontificia Universidad Cat\'olica de Chile 
for their support through a {\it Summer Research Fellowship}.


\begin{thebibliography}{10}




\bibitem{BaLeLiSi93}
V.~ Bach, R.~Lewis, E.~H.~Lieb, and H.~Siedentop,
{\it On the Number of Bound states of a Bosonic Coulomb System}, 
Mathematische Zeitschrift {\bf 214}, 441--459 (1993). 



\bibitem{Bad979} 
P. Bader, 
{\it Variational method for the Hartree equation of the helium atom},
Proc. Roy. Soc. Edinburgh Sect. {\bf A 82}, 27--39 (1978/79). 



\bibitem{Bau984}
B. Baumgartner, 
{\it On Thomas-Fermi--von Weizsacker and Hartree
energies as functions of the degree of ionisation}, 
J. Phys. A: Math. Gen. {\bf 17},  1593-1602 (1984).


\bibitem{BeBrLi981}
R.~Benguria, H.~Brezis, and E.~H.~Lieb,
{\it The Thomas--Fermi--von Weizs\"acker Theory of Atoms and Molecules},
Commun.  Math.   Phys.  {\bf 79}, 167--180 (1981).

\bibitem{BeLi83}
R.~Benguria, and E.~H.~Lieb, 
{\it   Proof of stability of highly negative ions in the absence of the Pauli principle}, 
Physical Review Letters {\bf 50} , 1771--1774 (1983).  

\bibitem{BeLi985}
R.~Benguria, and E.~H.~Lieb, 
{\it The most negative ion in the Thomas--Fermi--von Weizs\"acker theory of atoms and molecules}, 
J. Phys. B {\bf 18} , 1045--1059 (1985). 

\bibitem{Be29}
H.~Bethe, 
{\it Berechnung der Elektronenaffinit\"at des Wasserstoffs}, 
Zeitschrift f\"ur Physik {\bf 57}, 815--821 (1929).


\bibitem{Ch44}
S.~Chandrasekhar,
{\it The Negative Ions of Hydrogen and Oxygen in Stellar Atmospheres}, 
Rev. Mod. Phys. {\bf 16}, 301--306 (1944).

\bibitem{ChSi020}
H.-S.~Chen, H.~Siedentop, 
{\it On the excess charge of a relativistic statistical model of molecules with an inhomogeneity correction}, 
J. Phys. {\bf A 53}, 395201, 19 pp.  (2020). 


\bibitem{FeSe90}
C.~L.~Fefferman and L.~A.~Seco, 
{\it Asymptotic Neutrality of Large Ions},
Commun. Math. Phys. {\bf 128}, 109--130 (1990).


\bibitem{FrNaVa018a}
R.~L.~ Frank, P.~-T.~Nam, and H.~ Van Den Bosch,
{\it The ionization conjecture in Thomas--Fermi--Dirac--von Weizs\"cker theory}, 
Comm. Pure Appl. Math. {\bf 71}, 577--614 (2018).

\bibitem{FrNaVa018b}
R.~L.~ Frank, P.~-T.~Nam, and H.~ Van Den Bosch,
{\it The maximal excess charge in M\"uller density--matrix--functional theory}
Ann. Henri Poincar\'e  {\bf 19}, 2839?2867 (2018).

\bibitem{FrNaVa018c}
R.~L.~ Frank, P.~-T.~Nam, and H.~ Van Den Bosch,
 {\it A short proof of the ionization conjecture in M\"uller theory}, 
in {\bf Mathematical problems in quantum physics}, 1--12, Contemp. Math., {\bf 717}, 
Amer. Math. Soc., [Providence], RI, 2018.

\bibitem{He86}
J.~B. Hearnshaw, 
{\bf The analysis of starlight: one hundred and fifty years of astronomical spectroscopy}, 
Cambridge University Press, Cambridge, UK, 1986. 




\bibitem{Hi77a}
R.~N.~Hill, 
{\it Proof that the $H^{-}$ Ion Has Only One Bound State}, 
Phys. Rev. Lett. {\bf 38}, 643--646 (1977). 

\bibitem{Hi77b}
R.~N.~Hill, 
{\it Proof that the $H^{-}$  ion has only one bound state. Details and extension to finite nuclear mass}, 
J. Math. Phys. {\bf 18} ,  2316--2330 (1977). 



\bibitem{Ho98}    
H.~Hogreve,     
{\it On the maximal electronic charge bound by atomic nuclei}, 
J. Phys. B: At. Mol. Opt. Phys. {\bf 31} ,  L439--L446 (1998).

\bibitem{Hy64}
E.~A.~Hylleraas, 
{\it The Negative Hydrogen Ion in Quantum Mechanics and Astrophysics}, 
Astrophysica Norvegica {\bf IX}, 345--349 (1964).


\bibitem{LeLe013}
E.~ Lenzmann and M.~Lewin,
{\it Dynamical ionization bounds for atoms}, 
Anal. PDE {\bf 6}, 1183--1211 (2013).




\bibitem{Lie981}
E.~H.~Lieb, 
{\it Thomas--Fermi and Related Theories of Atoms and
Molecules},
Reviews in Modern Physics {\bf 53}, 603--641(1981).

\bibitem{Lie984a}
E.~H.~Lieb, 
{\it  Bound on the maximum negative ionization of atoms and molecules}, 
Phys. Rev. A {\bf 29}, 3018--3028(1984).

\bibitem{Lie984b}
E.~H.~Lieb, 
{\it  Atomic and Molecular Negative Ions}, 
Physical Review Letters {\bf 52}, 315--317 (1984). 

\bibitem{LiLo001}
Elliott H. Lieb and Michael~Loss,
{\it  Analysis, Second Edition},
Graduate Studies in Mathematics {\bf 14}, 
American Mathematical Society, Providence, 2001.




\bibitem{LiSe010}
E.~H.~Lieb and R.~Seiringer, 
{\it The Stability of Matter in Quantum Mechanics}, 
Cambridge University Press, Cambridge, 2010.


\bibitem{LiSiSiTh84}
E.~H.~Lieb, I.~M.~Sigal, B.~Simon, and W.~Thirring, 
{\it Asymptotic Neutrality of Large--Z Ions}, 
Physical Review Letters {\bf 52} , 994--996 (1984).

\bibitem{LiSiSiTh88}
E.~H.~Lieb, I.~M.~Sigal, B.~Simon, and W.~Thirring, 
{\it Asymptotic Neutrality of Large-Z Ions}, 
Commun. Math. Phys. {\bf 116} , 635-644 (1988). 

\bibitem{LiSi73} 
E.~H.~Lieb and B.~Simon,
{\it Thomas--Fermi Theory Revisited},
Phys. Rev. Lett. {\bf 31}, 681-- 683 (1973).

\bibitem{LiSi74} 
E.~H.~Lieb and B.~Simon,
{\it On Solutions to the Hartree--Fock Problem for Atoms and Molecules}, 
J. Chem. Physics {\bf 61}, 735--736 (1974). 

\bibitem{LiSi977a} 
E.~H.~Lieb and B.~Simon,
{\it The Hartree--Fock Theory for Coulomb Systems}, 
Commun. Math. Phys. {\bf 53}, 185--194 (1977). 


\bibitem{LiSi977b} 
E.~H.~Lieb and B.~Simon,
{\it The Thomas--Fermi theory of atoms, molecules and solids}, 
Advances in Math. {\bf 23}, 22-116 (1977).


 


\bibitem{Los005} 
M.~Loss,
{\it Stability of Matter}, 
Lecture Notes, 2005 [WEB site: \\
www.mathematik.uni$-$muenchen.de/$\sim$lerdos/WS08/QM/lossstabmath.pdf]



\bibitem{Nam012}
P.-T.~Nam, 
{\it New bounds on the maximum ionization of atoms}, 
Comm. Math. Phys. {\bf 312},  427--445  (2012).

\bibitem{Nam013}
P.-T.~Nam, 
{\it On the number of electrons that a nucleus can bind},  in 
{\bf Proceedings of the 17th ICMP, Aalborg 2012}, 
A. Jensen (ed.), World Sci. Publ., Singapore, 2013,
pp. 504--511. 


\bibitem{Nam020}
P.-T.~Nam, 
{\it The ionization problem}, 
Eur. Math. Soc. Newsl. No. 118, 22-27 (2020).





\bibitem{Ra96}
A.~R.~P.~Rau, 
{\it  The negative ion of hydrogen}, 
J. Astrophys. Astr. {\bf 17}, 113--145 (1996).



\bibitem{Rus981}
M.~B.~Ruskai, 
{\it Absence of Discrete Spectrum of Highly Negative Ions}, 
{\em Commun. Math. Phys.} {\bf 82}, 457--469 (1981).


\bibitem{Rus982}
M.~B.~Ruskai, 
{\it Absence of Discrete Spectrum in Highly Negative Ions, II. Extension to Fermions}, 
Commun. Math. Phys. {\bf 85}, 325--327 (1982). 

\bibitem{Rus994}
M.~B.~Ruskai, 
{\it Improved estimate on the number of bound states of negative charged bosonic atoms}, 
Annales de l'I.H.P.  \,  A {\bf 61}, 153--162  (1994).



\bibitem{SeSiSo90}
L.~A.~Seco, I.~M.~Sigal, and J.~P.~Solovej,
{\it Bound on the Ionization Energy of Large Atoms}, 
Commun.~Math.~Phys. {\bf 131},  307--315 (1990).




\bibitem{Sig982}
I.~M.~Sigal, 
{\it Geometric Methods in the Quantum Many--Body Problem.
Nonexistence of Very Negative Ions}, 
Commun. Math. Phys.  {\bf 85},  309--324 (1982).

\bibitem{Sig984}
I.~M.~Sigal, 
{\it How Many Electrons Can A Nucleus Bind?}, 
Ann. Phys. {\bf 157}, 307--320 (1984).

\bibitem{So90}
J.~P.~Solovej, 
{\it  Asymptotics for bosonic atoms}, 
Lett. Math. Phys. {\bf 20}, 165--172 (1990).




\bibitem{So91}
J.~P.~Solovej, 
{\it  Proof of the ionization conjecture in a reduced Hartree--Fock model}, 
Invent. Math. {\bf 104},  291--311 (1991).

\bibitem{So03}
J.~P.~Solovej, 
{\it The ionization conjecture in Hartree-Fock theory}, 
 Annals  of Math. {\bf158}, 509--576 (2003).



\bibitem{Stu973}
C. A. Stuart, 
{\it Existence theory for the Hartree equation}, 
Arch. Rational Mech. Anal. {\bf 51}, 60--69 (1973).

\bibitem{Stu975}
C. A. Stuart, 
{\it An Example in Nonlinear Functional Analysis:
The Hartree Equation}, 
Journal of Mathematical Analysis and Applications 
{\bf 49} ,725-733 (1975).


\bibitem{Wi39}
R.~Wildt, 
{\it Negative Ions of Hydrogen and the Opacity of Stellar Atmospheres}, 
Astrophysical Journal {\bf 90}, 611--620 (1939). 


\bibitem{Zh60}
G.M. Zhislin, 
{\it Discussion of the spectrum of Schr\"odinger operator for system of many particles}, 
Trudy. Mosk. Mat. Obs. {\bf 9},  81--128 (1960).

\bibitem{Zh71}
G.M. Zhislin, 
{\it On the Finiteness of the Discrete Spectrum of the
Energy Operator of Negative Atomic and Molecular Ions}, 
Teor. Mat.  Fiz. {\bf 7}, 332--334 1971) 
[English Translation, Theor. Math. Phys. {\bf 7}, 571--578 (1971).]



\end{thebibliography}

\end{document}